\newcommand{\keywords}[1]{\par\addvspace\baselineskip
\noindent\keywordname\enspace\ignorespaces#1}
\newtheorem{axiom}{\bfseries Axiom}
\newtheorem{assum}{\bfseries Assumption}
\newcommand{\wsq}{\qquad $\square$}
\begin{document}

\mainmatter

\title{
A Partially Defined Game with Costs }


%
%
\author{Satoshi Masuya
}
%


\institute{Faculty of Business Administration, Daito Bunka University.\\
1-9-1, Takashimadaira Itabashi-ku TOKYO 175-8571 (Japan)\\
\email{masuya@ic.daito.ac.jp}
}

\maketitle
\begin{abstract}
The present study explores a problem that can be resolved by employing the notion of a partially defined cooperative game, 
yet cannot by using a restricted game.  
The following situation is considered:
First, it is assumed that the worth of the grand and singleton coalitions are known.
It takes some amount of costs to obtain worth of unknown coalitions. 
If it is performed, then the worth of the grand coalition is decreased by the value of a cost function.
With the view point of fairness of a payoff allocation, we should examine coalitional worth
as much as possible. However, we should stop examining coalitional worth at some point since total payoff
is reduced by continuing the examinations.
We name the new decision making problem a partially defined cooperative game with costs.
The problem of a partially defined cooperative game with costs is finding the solution of partially defined cooperative games 
at each point and the best exiting rule of examinations of coalitional worth.
\keywords{Cooperative game, Partially defined cooperative game, Cost, Shapley value, Exiting rule}
\end{abstract} 

AMS Classification Code: 91A12

Competing interests: The authors declare no competing interests.

\section{Introduction}

Cooperative game theory provides a useful tool to analyze various cost and/or surplus allocation problems, the distribution of voting power in a parliament, and so on. 
This theory is employed to analyze problems that involve $n$ entities called players which are usually expressed by characteristic
functions that map each subset of players to a real number. The solutions are given by a set of $n$-dimensional real numbers or a value function that assigns a real number to each player. 
Such a real number can represent the cost borne by the player, the power of influence, allocation of shared profits, and so on. 

We deal with a cooperative game that is called a partially defined cooperative game in this paper. 
A partially defined cooperative game (a PDG, in short) is a cooperative game in which the worth of some coalitions are unknown. 
A cooperative game is called full if worth of all coalitions are known.

Willson~\cite{willson93} first considered partially defined cooperative games. 
He introduced a generalized Shapley value \cite{shapley53} derived solely from the known worth of coalitions in a game.

After that, many results are obtained in this area. 
Recently, Yu \cite{yu21} studied a cooperative game with a coalition structure under limited feasible coalitions. 
Here, a PDG with a coalition structure was considered, and he developed and axiomatized its Owen value \cite{owen77}.
Aguilera et al. \cite{aguilera10} and Calvo and Gutiérrez \cite{calvo15} proposed 
independently the same extension of the Shapley value for PDGs from different 
points of view. This extended value was 
characterized for games with restricted cooperation in Albizuri et al. \cite{masuya22} with three axioms which are more elementary than that proposed in \cite{calvo15}.
That is, this type of extended Shapley value has been often studied.
On the other hand, \v{C}ern\'{y} and Grabisch \cite{cerny24} studies so called player-centered PDGs and 
they derived the collection of monotonic full games which can be obtained from a PDG within the class of such games.

Partially defined games and restricted games are mathematically equivalent, despite the fact that 
the underlying reasons for the presence of unknown coalitions are different.
The worth of coalitions in PDGs are unknown due to the prohibitive cost of examining all possible coalitions. 
In restricted games, the worth of coalitions are unknown since those coalitions cannot form because of 
the difference of ideologies among players, and so on.

Myerson \cite{myerson77} first considered restricted games using the set of feasible coalitions
that is called communication situations. Subsequently, he proposed and axiomatized the Shapley value for restricted games 
which is called the Myerson value. 
In the line of this research, many studies generalize the set of feasible coalitions of a restricted game.
As representative studies, conference structures by Myerson \cite{myerson80} and union stable systems by Algaba et al.~\cite{algaba01}
can be mentioned. 
 
The present study explores a problem that can be resolved by employing the notion of a PDG, yet cannot by using a restricted game.


The following situation is considered:
First, it is assumed that the worth of the grand and singleton coalitions are known.
It takes some amount of costs to obtain worth of unknown coalitions. 
If it is performed, then the worth of the grand coalition is decreased by the value of a cost function. 

With the view point of fairness of a profit allocation, we should examine coalitional worth
as much as possible. However, we should stop examining coalitional worth at some point since total payoff
is reduced by continuing the examinations.   
We name the new decision making problem a partially defined game with costs.
The problem of a PDG with costs is finding the solution of PDGs at each point and 
the best exiting rule of examinations of coalitional worth.

We extend the Shapley value for PDGs that is proposed in Aguilera et al. \cite{aguilera10}, Calvo and Gutiérrez \cite{calvo15},
and Albizuri et al.\cite{masuya22} to PDGs with costs and axiomatize the proposed value. Furthermore, we propose a rule 
to finish examinations of coalitional worth for such problems and axiomatize the proposed rule. 

This paper is organized as follows. 
In Section 2, we present the definitions of partially defined games and their Shapley value.
In Section 3, we define a partially defined game with costs and the assumptions a cost function has in this paper. 
In Section 4, we propose and axiomatize the Shapley value for PDGs with costs. 
In Section 5, we propose a rule to stop examinations of coalitional worth and 
axiomatize the proposed rule. 
In Section 6, concluding remarks are given.

\section{The Shapley Value and Partially Defined Games}

In this section, we provide the definition and related concepts of partially defined games and 
the Shapley value for PDGs that is proposed by Albizuri et al.~\cite{masuya22}, which is used to define the Shapley value
for PDGs with costs. 

Let $N=\{1,2,\ldots,n \}$ be the set of players. A non-empty set of players $S \subseteq N$ is called 
a coalition. Then the pair $(N,v)$ where $v:2^N \to \R $ is called a TU-game. For every coalition $S \subseteq N$, 
$v(S)$ is called the worth of the coalition $S$.  
In this paper, we assume a property for TU-games as follows:
\begin{equation}\label{eq:2-1}
v(S) \ge 0 \Forall S \subseteq N, \text{ and } v(N) > 0,
\end{equation}
The set of TU-games which satisfy the condition above is denoted $\Gamma^N$.


The Shapley value \cite{shapley53} is a well-known one-point solution concept for TU-games and 
its explicit form is:
\begin{equation}
\phi_i(v)=\sum_{ \stackrel{\scriptstyle S \subseteq N }{S \ni i}  }\frac{(|S|-1)!(n-|S|)!}{n!} (v(S)-v(S\setminus i )),\ \forall i \in N.
\end{equation}

A PDG,
on the set of known coalitions  ${\cal K}$ is a set-function $v$
which maps every set $S \in {\cal K}$ a real number $v(S)$, such that $v(\emptyset) = 0$. 
A triple $(N,{\cal K}, v)$ identifies a PDG. 
Usually, it is assumed that the worth of grand coalition is known. 
The set of PDGs with the set of known coalitions ${\cal K}$ is denoted $\Gamma^{N, {\cal K}}$. 


We provide the definition of an extension of the Shapley value for PDGs by Albizuri et al.~\cite{masuya22}. 
According to Harsanyi's procedure \cite{harsanyi63}, all the members of a coalition $S$ receive
a dividend from $S$. We will distinguish two cases depending on whether the coalitional worth is known
or not: (i) if $S \in {\cal K}$ the total amount of the dividends allocated by all the subcoalitions
of $S$ is $v(S)$, (ii) otherwise, i.e. if  $S \not\in {\cal K}$, the dividend of $S$ is zero.
Formally the procedure can be described as follows. If $v$ is a PDG, define recursively a function
$d_v: 2^N \to \R $ by:
\begin{equation}
d_v(\emptyset) = 0; \text{ and} 
\end{equation}
\begin{equation}
d_v(S) =\left\{
\begin{array}{cc}
v(S) -\sum_{T \subsetneq S} d_v(T), & \text{ if } S \in  {\cal K}, \\
0, & \text { if } S \not\in {\cal K}.
\end{array}
\right.
\end{equation}

The real number $\frac{d_v(S)}{|S|}$
is usually called the Harsanyi dividend of coalition $S$ in $v$.
The Shapley value for PDGs by \cite{aguilera10}\cite{calvo15}\cite{masuya22} $\Hat{\phi}$ is defined as follows:
\begin{equation}
\Hat{\phi}_i(v) = \sum_{S \subseteq N, S \ni i} \frac{d(v,S)}{|S|} \Forall i \in N.
\end{equation}

\section{Partially Defined Games with Costs}

In this section, we define a PDG with costs and related concepts.

The following situation is considered:
On the first stage, we assume that the worth of the grand and singleton coalitions are only known. 
It takes some amount of costs to obtain worth of unknown coalitions. The cost function of $v \in \Gamma^N$ 
is defined by $c_v: 2^N  \setminus {\cal L} \to \R_+$ where ${\cal L} = \{ \{ 1 \}, \dots, \{ n \}, N \}$ and $\R_+= \{ r \in \R | r \ge 0\}$.
If it is performed, then the worth of the grand coalition is decreased by the value of a cost function. 
For instance, if the examination is performed for $S \subset N$, then the coalitional worth of $N$ is reduced to $v(N) - c_v(S)$. 
If $c_v(S)=0$ for all $S \in 2^N  \setminus {\cal L}$, then all coalitional worth can be 
obtained with no costs. With the view point of fairness of an allocation of payoffs, we should examine coalitional worth
as much as possible. However, we should stop examining coalitional worth at some point since total payoff
is reduced by continuing the examinations.   

In this situation, consider a third party who implements a game.   
If he is not satisfied with the allocated payoffs of the first game (that is, the game
in which the worth of the grand and singleton coalitions are only known) and agrees to the payment to examine the 
worth of coalition $S$ whose cost is lowest, that is, $c_v(S) \le c_v(R) \Forall R \in 2^N \setminus {\cal L}$, 
then the worth of the coalition $S$ is obtained. 
Then the same dialog is performed. If he is not satisfied with the allocated payoffs even if he considers
the cost, then the worth of a coalition $T$ whose cost is the second lowest, that is, $c_v(T) \ge c_v(S)  \text{ and } c_v(T) \le c_v(R) \Forall R \in 2^N \setminus ( {\cal L} \cup \{ S \})$,
is obtained. 
In this study, we assume that the third party knows the order of values  of unknown coalitional worth although 
he does not know the values  of them. That is, he knows the following relation:
\begin{equation}
0 \le c_v(S_1^v) \le c_v(S_2^v) \le, \dots, \le c_v(S_{2^n-n-2}^v),
\end{equation}
where $\{ S_1^v, S_2^v, \dots, S_{2^n-n-2}^v \}$ is the set of unknown coalitions in $v$. 
If there is no confusion, then $S_k^v$ is denoted $S_k$.

To follow the rule above, he examine the worth of unknown coalitions with the order 
$S_1^v, S_2^v, \dots, S_{2^n-n-2}^v$. If $c_v(S_l^v) = c_v(S_{l+1}^v)$, then we assume that the worth
of coalitions $S_l^v$ is decided to be examined prior to $S_{l+1}^v$ in some manner.  
This order is denoted ${\cal S}^v = (S_1^v, S_2^v, \dots, S_{2^n-n-2}^v)$ for all $v \in \Gamma^N$.

A triple $(N, v, c_v)$ identifies a PDG with costs.
Since $N$ is fixed in this study, we simply write $v$ and $c_v$ to represent a game and a cost function if there is no confusion.
The problem of a PDG with costs is finding the solution of PDGs and the best exiting rule of examinations of coalitional worth.

Here, we give several assumptions which are related to a cost function in this study. 
Let $v, w \in \Gamma^{N}$. Then we define $(v + w)(S) = v(S) + w(S)$
for every $S \subseteq N$. 
Let $x \vee y$ represents $\max \{ x, y \}$ and $x \wedge y$ represents $\min \{ x, y \}$
where $x,y \in \R$. 




\begin{assum}\label{as:4}
Let $v \in \Gamma^N$ and $S \subset N$ such that $|S| \ge 2$. Then, $c_v(S)=0$ holds 
if $v(S) = 0$.
\end{assum}










\begin{assum}\label{as:9}
Let $v \in \Gamma^N$. We know the ordering relation:
\begin{equation}
0 \le c_v(S_1^v) \le c_v(S_2^v) \le \dots \le c_v(S_{2^n-n-2}^v)
\end{equation}
where $S_k^v \in 2^N\setminus$ ${\cal L}$ for all $k \in \{ 1, \dots, 2^n-n-2 \}$.
\end{assum}




\begin{assum}\label{as:11}
Let $v, w \in \Gamma^N$. Then the following holds:
\begin{equation}
c_{v+w}(S_{k}^{v+w})=c_v(S_k^v) + c_w(S_k^w),
\end{equation}
\begin{equation}
c_{v-w}(S_{k}^{v-w})=c_v(S_k^v) - c_w(S_k^w),  
\end{equation}
for all $k \in \{ 1,2, \dots, 2^n-n-2 \}$.
\end{assum}
In this study, we assume that the cost function of the sum game $v+w$ is the sum of 
the component games $v$ and $w$ on the same stage.

Let $v \in \Gamma^N$. In the following, we define $v^{\cal K} \in \Gamma^{N, {\cal K}}$ as follows:
\begin{equation}
v^{\cal K}(S) = v(S) \text{ if } S \in {\cal K}.
\end{equation}


Then, 
a solution for PDGs with costs is represented by $\sigma: \Gamma^N \to \R^{n \times (2^n-n-1)}$ as follows:
\begin{equation}
\sigma(v) = 
\begin{bmatrix}
\sigma_{1,0}(v), & \sigma_{1,1}(v), & \sigma_{1,2}(v), & \dots, & \sigma_{1, 2^n-n-2}(v) \\
\sigma_{2,0}(v), & \sigma_{2,1}(v), & \sigma_{2,2}(v), & \dots, & \sigma_{2, 2^n-n-2}(v) \\
\vdots & \vdots & \dots & \vdots \\
\sigma_{n,0}(v), & \sigma_{n,1}(v), & \sigma_{n,2}(v), & \dots, & \sigma_{n, 2^n-n-2}(v) 
\end{bmatrix}
\end{equation}
where $\sigma_{i,k}(v)$ is a solution of player $i$ after $k$th examination of coalitional worth for every $k \in \{0, 1, \dots, 2^n-n-2 \}$ and $i \in N$.

In this study, we propose a solution for PDGs with costs and 
an indicator function for them and axiomatize them.

\section{The Shapley Value for Partially Defined Games with Costs and Its Axiomatization}

In this study, we use the Shapley value for partially defined games by \cite{masuya22} to define the Shapley value
for PDGs with costs.
Let $v \in \Gamma^N$. The Shapley value $\Tilde{\phi}:  \Gamma^N \to \R^{n \times (2^n-n-1) }$ is 
defined as follows:

\begin{equation} \label{eq:propose}
\Tilde{\phi}_{i,k}(v)=\sum_{\stackrel{\scriptstyle S \subset N }{S \ni i}}  \frac{d_{v^{\cal K}}(S)}{|S|} + \frac{d_{v^{\cal K}}(N)-\sum_{l =1}^{k} c_v(S_l^v)}{n} \Forall i \in N.
\end{equation}
for all $k \in \{0, 1, \dots, 2^n-n-2  \}$, where ${\cal K}= {\cal L} \cup \{ S_1^v, \dots, S_k^v \}$. 

The proposed value is essentially the same as the value proposed by \cite{masuya22}.
The difference between them is considering the cost to examine the coalitional worth.

We need several definitions before axiomatizing the proposed value. 

\begin{definition}[Partnership]
Let $v \in \Gamma^N$ and $S_k^v \subset N$ for some $k \in \{ 1,2, \dots, 2^n-n-2\}$.
If $v(T) = 0 \Forall T \not\supseteq S_k^v$ holds, then $S_k^v$ is said to be a partnership of $v$.
\end{definition}

A partnership is the coalition in which its members are not included cannot gain payoff.   
A similar concept which is called a p-type coalition is defined by Albizuri et al. \cite{masuya22}. 

A coalition $S$ is said to be a zero-coalition in $v \in \Gamma^{N, {\cal K}}$ if 
$T \subseteq S$ and $T \in {\cal K}$ imply $v(T) = 0$. 

\begin{definition}[p-type coalition]
A non-empty coalition $P \subseteq N$ is said to be a p-type coalition,
in $v \in \Gamma^{N, {\cal K}}$ if for all $S \in {\cal K}$ such that $P \setminus S \neq \emptyset$ it holds:
\begin{enumerate}
\item $S \setminus P \in {\cal K}$ implies $v(S)=v(S\setminus P)$, and
\item $S \setminus P \not\in {\cal K}$ implies $S$ is a zero-coalition. 
\end{enumerate}
\end{definition}

\begin{definition}[Carrier] \label{df:carrier}
Let $v \in \Gamma^N$ and $S \in { \cal K}$.
If $v^{\cal K}(T) = v^{\cal K}(T \cap S) \Forall T \in {\cal K}$, then $S$ is said to be a carrier of $v^{\cal K}$.
\end{definition}

Definition \ref{df:carrier} is an extension of a carrier for a TU-game which is a well-known concept.

We axiomatize the proposed value $\Tilde{\phi}$. Let $\sigma: \Gamma^N \to \R^{n\times (2^n-n-1)}$. 

Let $v \in \Gamma^{N}$. 

\begin{axiom}[Efficiency] \label{ax:E}
Let $v \in \Gamma^N$. Then the following holds:
\begin{equation}
\sum_{i \in N} \sigma_{i,k}(v) =v(N) - \sum_{l=1}^k c_v(S_l^v), \text{ for all } k \in \{0, 1, \dots, 2^n-n-2 \}.  
\end{equation}
\end{axiom}

\begin{axiom}[Additivity] \label{ax:ADD}
Let $v_1, v_2 \in \Gamma^N$. Then the following holds:
\begin{equation}
\sigma(v_1 + v_2) = \sigma(v_1) + \sigma(v_2).  
\end{equation}
\end{axiom}

Axiom \ref{ax:ADD} is the adaptation of the axiom of Additivity for the solution of TU-games to PDGs with costs.

\begin{axiom}[Partnership] \label{ax:PARTNER}
Let $v \in \Gamma^N$ and $S_k^v \subset N$ a partnership for some  $k \in \{ 1,2, \dots, 2^n-n-2\}$ of $v$.
Then the following holds:
\begin{equation}
\sigma_{i,k}(v) = \sigma_{j,k}(v) \Forall i,j \in S_k^v.
\end{equation}
\end{axiom}

Axiom \ref{ax:PARTNER} states that all members of the partnership of a game are allocated the same payoff. 

\begin{axiom}[Carrier] \label{ax:CARRIER}
Let $v \in \Gamma^N$ and $S_k^v \subset N$ a carrier of $v^{\cal K}$ where $k \in \{1, \dots, 2^n-n-2\}$ .
Then, $\sigma_{i,k}(v) = -\frac{\sum_{l=1}^{k} c_v(S_l^v)}{n}$ for all $i \not\in S_k^v$. 
\end{axiom}

Axiom \ref{ax:CARRIER} states that any player who does not belong to a carrier of a game is not allocated a payoff
and only pays costs.


\begin{axiom}[Fairness of first stage] \label{ax:FAIR}
Let $v \in \Gamma^N$. Then the following holds:
\begin{equation}
\sigma_{i,0}(v) - \sigma_{j,0}(v)= v(i) - v(j), \text{ for all } i,j \in N.  
\end{equation}
\end{axiom}

Axiom \ref{ax:FAIR} is a new axiom and states that the difference of allocated payoffs between two players 
is the difference between their coalitional worth. 

\begin{axiom}[Zero game] \label{ax:ZEROG}
Let $v \in \Gamma^N$ such that $v^{\cal K}(T) =0$ for all $T \in {\cal K} \setminus \{ N \}$
where ${\cal K} = \{ \{ 1 \}, \dots, \{ n \}, S_1^v, \dots, S_k^v, N \}$.  
Then, $\sigma_{i,k}(v) = \frac{v(N)}{n}$ for all $i \in N$.
\end{axiom}

Axiom \ref{ax:ZEROG} is a new axiom and states that all players are allocated the payoff equally
if all worth of known coalitions are zero other than the grand coalition.


\begin{theorem}\label{th:1}
$\Tilde{\phi}$ is the unique function on $\Gamma^N$ that satisfies Axioms \ref{ax:E} through \ref{ax:ZEROG}.
\end{theorem}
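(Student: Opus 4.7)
My plan is to prove existence by direct axiom verification and uniqueness by a decomposition argument combined with the per-stage axioms.

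For existence, Efficiency follows from the Harsanyi identity $\sum_{S\subseteq N} d_{v^{\cal K}}(S) = v(N)$ together with the elementary count that each dividend $d_{v^{\cal K}}(S)$ with $S\subsetneq N$ is seen by $|S|$ players and so contributes $d_{v^{\cal K}}(S)$ in total to $\sum_{i}\sum_{S\ni i, S\subsetneq N}d_{v^{\cal K}}(S)/|S|$. Additivity combines linearity of $v^{\cal K}\mapsto d_{v^{\cal K}}$ with Assumption \ref{as:11}. Fairness at $k=0$ is immediate because only singleton and $N$ dividends survive when ${\cal K}={\cal L}$, so $\tilde\phi_{i,0}(v)-\tilde\phi_{j,0}(v)=v(i)-v(j)$. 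For Zero Game, an induction on $|T|$ forces $d_{v^{\cal K}}(T)=0$ on proper known coalitions (so $d_{v^{\cal K}}(N)=v(N)$), while Assumption \ref{as:4} kills every $c_v(S_l^v)$. Partnership and Carrier both rest on a parallel induction showing that $d_{v^{\cal K}}(S)$ vanishes off the supersets of $S_k^v$ (Partnership) or off the subsets of $S_k^v$ (Carrier); the claimed symmetries of $\tilde\phi_{i,k}$ then fall out by inspecting its defining sum, using in the Carrier case the additional observation $d_{v^{\cal K}}(N)=v(N)-v(S_k^v)=0$.

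For uniqueness I treat $k=0$ separately from $k\ge 1$. At $k=0$ there are no costs and ${\cal K}={\cal L}$, so averaging the Fairness identity over $j\in N$ and invoking Efficiency yields $\sigma_{i,0}(v)=v(i)+(v(N)-\sum_{j\in N}v(j))/n=\tilde\phi_{i,0}(v)$ directly. For $k\ge 1$ I use Additivity to reduce to ``elementary'' games aligned with the Harsanyi decomposition of $v^{\cal K}$. On a summand $\alpha u_T$ with $T=S_l^v$ for some $l\le k$, the coalition $T$ is simultaneously a partnership of the summand and a carrier of its restriction, so Carrier fixes the off-$T$ coordinates to $-\sum_l c/n$, Partnership equates the on-$T$ coordinates, and Efficiency supplies the common value; the $T=N$ summand is resolved by Zero Game, and singleton contributions reduce to the $k=0$ analysis applied to the summand.

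The main obstacle is realizing this decomposition inside $\Gamma^N$, since Harsanyi dividends can be negative whereas $\Gamma^N$ only contains non-negative games, so Shapley's signed decomposition is not directly available. The workaround I would use is to add a sufficiently large ``background'' game $w\in\Gamma^N$ (for instance a large multiple of $u_N$ plus scaled singleton unanimity games) whose $\sigma$-value is already pinned down by Fairness and Zero Game, then split $v+w$ as a non-negative combination of unanimity games $u_T$ with $T\in{\cal K}$, evaluate $\sigma$ on each summand by the argument above, and finally recover $\sigma(v)=\sigma(v+w)-\sigma(w)$ by Additivity. A secondary technicality is that Assumption \ref{as:11} must be invoked so that the orderings ${\cal S}^{u_T}$ of the summands align with ${\cal S}^{v}$ and the cost contributions aggregate to the required $\sum_{l=1}^{k}c_v(S_l^v)/n$ term on each summand.
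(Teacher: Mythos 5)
Your overall route --- existence by direct verification, uniqueness by decomposing into unanimity games $c_T u_T$, pinning each summand down stage by stage with Fairness/Efficiency at $k=0$ and Carrier/Partnership/Zero Game/Efficiency at $k\ge 1$, then summing via Additivity --- is exactly the paper's. Your observation that the signed unanimity coefficients take the decomposition outside $\Gamma^N$ (so that Additivity, stated only for games in $\Gamma^N$, cannot be applied to it directly) is a genuine point the paper silently ignores, and the background-game repair $\sigma(v)=\sigma(v+w)-\sigma(w)$ is the standard correct fix; likewise your remark that Assumption \ref{as:11} is needed to align the orderings ${\cal S}^{u_T}$ with ${\cal S}^v$ names a real technicality the paper leaves implicit.

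Two steps in your uniqueness argument would, however, fail as written. First, ``singleton contributions reduce to the $k=0$ analysis applied to the summand'' does not work for $k\ge 1$: that analysis rests on Axiom \ref{ax:FAIR}, which constrains only $\sigma_{\cdot,0}$ and says nothing about $\sigma_{\cdot,k}$ for $k\ge 1$. The repair is the one the paper uses: $\{i\}$ is a carrier of $(\alpha u_{\{i\}})^{\cal K}$ at every stage, so Axiom \ref{ax:CARRIER} fixes the off-$i$ coordinates and Efficiency the remaining one. Second, your case enumeration for $k\ge 1$ (summands with $T=S_l^v$ for $l\le k$; $T=N$; singletons) omits all summands $\alpha u_T$ with $T\notin{\cal K}$ and $2\le |T|<n$, which for small $k$ are most of the non-singleton coalitions. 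When no examined coalition contains $T$, the restriction is a zero game and Axiom \ref{ax:ZEROG} (with Assumption \ref{as:4} killing the costs) applies; but when some examined $S_l\supseteq T$ the restriction is not a zero game, $T$ itself is not in ${\cal K}$ so Definition \ref{df:carrier} does not make it a carrier, and one needs the more delicate Partnership/Carrier analysis that the paper carries out in the final sub-cases of its case (ii). Your sketch gives no argument for these summands, so the induction on $k$ is not closed.
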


\begin{proof}
First, we show that $\Tilde{\phi}$ satisfies Axioms \ref{ax:E} through \ref{ax:ZEROG}.
It is easy to show that $\Tilde{\phi}$ satisfies Axiom \ref{ax:ADD}
by Assumption \ref{as:11}. 
It is easy to verify that $\Tilde{\phi}$ satisfies Axiom \ref{ax:FAIR}. 
It is straightforward to show that $\Tilde{\phi}$ satisfies Axiom \ref{ax:E}. From the definitions of $\Tilde{\phi}$ and a partnership,
$\Tilde{\phi}$ satisfies Axiom \ref{ax:PARTNER}.
From the definitions of $\Tilde{\phi}$ and a carrier, $\Tilde{\phi}$ satisfies Axiom \ref{ax:CARRIER}.
 $\Tilde{\phi}$ satisfies Axiom \ref{ax:ZEROG} from the definition of $\Tilde{\phi}$ and Assumption \ref{as:4}.

Next, we show the uniqueness. Let $\sigma: \Gamma^N \to \R^{n\times (2^n-n-1)}$.
Let $u_S \in \Gamma^N$ such that $S \subseteq N$ be a unanimity game and $c_S \in \R$.


(i)  When $|S| = 1$:

Let $i \in N$ and $S =\{ i \}$. Then, from the definition of a unanimity game and Axiom \ref{ax:FAIR},
the following holds:
\begin{equation}\label{eq:th1-1}
\sigma_{j, 0}(c_S u_S) = \sigma_{i, 0}(c_S u_S) -c_S \Forall j \in N \setminus i.
\end{equation}
Moreover, $\sigma_{i, 0}(c_S u_S) + \sigma_{j, 0}(c_S u_S) =c_S$ holds from Axiom \ref{ax:E}.

Therefore, using this equation and equation (\ref{eq:th1-1}), $\sigma_{l,0}(c_S u_S)$ is obtained
uniquely for all $l \in N$. Assume that $\sigma_{i,k}(c_S u_S)$ is obtained uniquely $\Forall i \in N$
$\Forall k \in \{ 0, \dots, m  \}$ for some $m \ge 0$. We show that $\sigma_{i,m+1}(c_S u_S)$ is obtained 
uniquely for all $i \in N$ using the induction with respect to $k$. 
Let $k=m+1$. If $c_S u_S(S_l) = 0 \Forall l \in \{1, \dots, m+1\}$, then $\{ i \}$ is a carrier of $(c_S u_S)^{\cal K}$.
Therefore, $\sigma_{j, m+1}(c_S u_S)=0 \Forall j \in N \setminus i$ holds from Axiom \ref{ax:CARRIER}
and Assumption \ref{as:4}. Therefore, $\sigma_{i, m+1}(c_S u_S) = c_S$ from Axiom \ref{ax:E}. 

That is, from the induction hypothesis, $\sigma_{l,k}(c_S u_S)$ for all $l \in N$ for all $k \in \{ 0, \dots, m+1  \}$
is obtained uniquely. 

If $c_S u_S(S_{m+1}) = c_S$, then $S_{m+1}$ is a carrier of $(c_S u_S)^{\cal K}$.
Furthermore, $\{ i \}$ is a carrier of  $(c_S u_S)^{\cal K}$ from the definition of $c_S u_S$.
Thus, $\sigma_{j,m+1}(c_S u_S) = -\frac{\sum_{r=1}^{m+1} c_{c_S u_S}(S_r)}{n}$ for all $j \neq i$
holds from Axiom \ref{ax:CARRIER}. 
Therefore, $\sigma_{i,m+1}(c_S u_S) = c_S - \frac{\sum_{r=1}^{m+1} c_{c_S u_S}(S_r)}{n}$
from Axiom \ref{ax:E}. That is, from the induction hypothesis, $\sigma_{l,k}(c_S u_S)$ for all $l \in N$ for all $k \in \{ 0, \dots, m+1  \}$
is obtained uniquely. 

(ii)  When $|S| \ge 2$:

In this case, $\sigma_{i, 0}(c_S u_S) - \sigma_{j, 0}(c_S u_S) = 0 \Forall i,j \in N$ from Axiom \ref{ax:FAIR}.
Therefore, from Axiom \ref{ax:E},  $\sigma_{i, 0}(c_S u_S) = \frac{c_S}{n} \Forall i \in N$.  

Assume that $\sigma_{i,k}(c_S u_S)$ is obtained uniquely $\Forall i \in N$
$\Forall k \in \{ 0, \dots, m  \}$ for some $m \ge 0$. We show that $\sigma_{i,m+1}(c_S u_S)$ is obtained 
uniquely for all $i \in N$ using the induction with respect to $k$. 
Let $k =m+1$. 
If $c_S u_S(S_l) = 0 \Forall l \in \{1, \dots, m+1\}$, then $\sigma_{i,l}(c_S u_S)=\frac{c_S}{n}$ for all $i \in N$
from Axiom \ref{ax:ZEROG}. 

If $c_S u_S(S_{m+1}) = c_S$, then $S_{m+1}$ is a carrier of $(c_S u_S)^{\cal K}$.
Thus, $\sigma_{j,m+1}(c_S u_S)= - \frac{\sum_{r=1}^{m+1} c_{c_S u_S}(S_r)}{n}$ for all $j \not\in S_{m+1}$ from Axiom \ref{ax:CARRIER}.
Moreover, if $S_{m+1}=S$, then $S_{m+1}$ is a partnership of $v$. Thus, $\sigma_{i,m+1}(c_S u_S)=\frac{c_S}{|S_{m+1}|} - \frac{\sum_{r=1}^{m+1} c_{c_S u_S}(S_r)}{n}$
from Axiom \ref{ax:E} and \ref{ax:PARTNER}. If $S_{m+1} \supset S$ and $\{ T \in {\cal K} | S \subseteq T \subset S_{m+1} \} = \emptyset$, then $S_{m+1}$ is a partnership of $v$.
Thus,  $\sigma_{i,l}(c_S u_S)=\frac{c_S}{|S_{m+1}|} - \frac{\sum_{r=1}^{m+1} c_{c_S u_S}(S_r)}{n}$.

If $S_{m+1} \supset S$ and $\{ T \in {\cal K} | S \subseteq T \subset S_{m+1} \} \neq \emptyset$, then  $S$ is a partnership of $v^{\cal K}$ and a carrier of $v^{\cal K}$.
Thus,  $\sigma_{i,m+1}(c_S u_S)=\frac{c_S}{|S|} - \frac{\sum_{r=1}^{m+1} c_{c_S u_S}(S_r)}{n}$ for all $i \in S$ and 
 $\sigma_{i,m+1}(c_S u_S)= - \frac{\sum_{r=1}^{m+1} c_{c_S u_S}(S_r)}{n}$ for all $i \in N \setminus S$ from Axiom \ref{ax:E}, \ref{ax:PARTNER},
and \ref{ax:CARRIER}. 

That is, from the induction hypothesis,  $\sigma_{l,k}(c_S u_S)$ for all $l \in N$ for all $k \in \{ 0, \dots, m+1 \}$ is obtained uniquely.

From Case (i) and (ii), we obtain $\sigma(v)$ uniquely when $v=c_S u_S$ for all $S \subseteq N$.

Furthermore, from Axiom \ref{ax:ADD}, we have:

\begin{equation}
\sigma_{i,k}(v) = \sigma_{i,k}(\sum_{S \subseteq N} c_S u_S) = \sum_{S \subseteq N} \sigma_{i,k}(c_S u_S), 
\end{equation}
for any $v \in \Gamma^N$. 

That is, we obtain $\sigma(v)$ uniquely for all $v \in \Gamma^N$.
\wsq

\end{proof}

\begin{example}\label{ex:1}
Let $N=\{ 1, 2, 3 \}$, $v \in \Gamma^N$, and let $c_v(\{ 1,2 \})=3$, $c_v(\{ 1,3 \})=2$, and $c_v(\{ 2,3 \})=2$. 
Assume that $S_1^v=\{ 1, 3 \}$, $S_2^v=\{ 2,3 \}$, $S_3^v=\{ 1,2 \}$.

The game $v$ is defined as follows:

\begin{align*}
& v(\{ 1 \}) = 5,\ v(\{ 2 \}) = 3,\  v(\{ 3 \}) = 0, \\
& v(\{ 1,2 \}) = 10,\ v(\{ 1,3 \}) = 8,\ v(\{ 2,3 \}) = 5,\ v(\{ 1,2,3 \}) = 20.
\end{align*}

The Harsanyi dividend $d_v$ can be obtained as follows:
\begin{align*}
& d_v(\{ 1 \}) = 5,\ d_v(\{ 2 \}) = 3,\  d_v(\{ 3 \}) = 0, \\
& d_v(\{ 1,2 \}) = 2,\ d_v(\{ 1,3 \}) = 3,\ d_v(\{ 2,3 \}) = 2,\ d_v(\{ 1,2,3 \}) = 5.      
\end{align*}

Then, $\Tilde{\phi}(v)$ is obtained from equation (\ref{eq:propose}) as follows:
\begin{align*}
\Tilde{\phi}_{1,0}(v) = 9,\ \Tilde{\phi}_{2,0}(v) = 7, \ \Tilde{\phi}_{3,0}(v) = 4, \\ 
\Tilde{\phi}_{1,1}(v) = \frac{53}{6}, \Tilde{\phi}_{2,1}(v) = \frac{32}{6}, \Tilde{\phi}_{3,1}(v) = \frac{23}{6}, \\
\Tilde{\phi}_{1,2}(v) = \frac{45}{6}, \Tilde{\phi}_{2,2}(v) = \frac{30}{6}, \Tilde{\phi}_{3,2}(v) = \frac{21}{6}, \\
\Tilde{\phi}_{1,3}(v) = \frac{41}{6}, \Tilde{\phi}_{2,3}(v) = \frac{26}{6}, \Tilde{\phi}_{3,3}(v) = \frac{11}{6}.
\end{align*}
\end{example}

We compare the axioms which are newly considered in this paper with conventional axioms which are well-known. 
We compare Axiom \ref{ax:PARTNER} with the axiom of Symmetric-Partnership which is defined by Albizuri et al.\cite{masuya22}.

\begin{axiom}[Symmetric-Partnership] \label{ax:P-TYPE}
If $P$ is a p-type coalition in $v \in \Gamma^{N, {\cal K}}$, then
\begin{equation}
\sigma_{i,k}(v) = \sigma_{j,k}(v) \Forall i, j \in P \Forall k \in \{1, \dots, 2^n-n-2\}.
\end{equation}
\end{axiom}

Axiom \ref{ax:PARTNER} and \ref{ax:P-TYPE} are similar in some sense. 
We used Axiom \ref{ax:PARTNER} to axiomatize the proposed value since it is more comprehensive.
However, we can axiomatize the proposed value using Axiom \ref{ax:P-TYPE} as shown below. 

\begin{corollary}\label{co:1}
$\Tilde{\phi}$ is the unique function on $\Gamma^N$ that satisfies Axiom \ref{ax:E}, \ref{ax:ADD}, 
\ref{ax:CARRIER}, \ref{ax:FAIR}, \ref{ax:ZEROG}, and \ref{ax:P-TYPE}.
\end{corollary}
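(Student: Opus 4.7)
The strategy is to reduce Corollary \ref{co:1} to Theorem \ref{th:1} by observing that, for the games in which Axiom \ref{ax:PARTNER} is invoked in that proof, the relevant coalitions are actually p-type coalitions in the associated $v^{\cal K}$, so Axiom \ref{ax:P-TYPE} delivers exactly the same equalities. The existence direction is likewise inherited: the Shapley value for PDGs of \cite{masuya22} already satisfies a symmetric-partnership axiom, and our extra cost correction is player-independent.

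For existence I would verify that $\Tilde{\phi}$ satisfies Axiom \ref{ax:P-TYPE}. The formula for $\Tilde{\phi}_{i,k}(v)$ splits into the Shapley value of $v^{\cal K}$ for PDGs (in the sense of \cite{masuya22}) and the player-independent cost correction $-\frac{\sum_{l=1}^{k} c_v(S_l^v)}{n}$. Since \cite{masuya22} already assigns equal payoffs to members of any p-type coalition, and the cost correction is the same for every player, $\Tilde{\phi}_{i,k}(v)=\Tilde{\phi}_{j,k}(v)$ for $i,j \in P$ whenever $P$ is a p-type coalition in $v^{\cal K}$. Validity of the other five axioms is already established in Theorem \ref{th:1}.

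For uniqueness, the key lemma is that every partnership of $v$ (in the sense of this paper) is a p-type coalition in $v^{\cal K}$ for every ${\cal K} \supseteq {\cal L}$. Indeed, if $v(T)=0$ for every $T \not\supseteq P$ and $S \in {\cal K}$ satisfies $P \setminus S \neq \emptyset$, then $v^{\cal K}(S)=v(S)=0$; moreover if $S \setminus P \in {\cal K}$ then $S \setminus P \not\supseteq P$, so $v^{\cal K}(S \setminus P)=0=v^{\cal K}(S)$, verifying condition~(1); and if $S \setminus P \notin {\cal K}$, then every $T \subseteq S$ with $T \in {\cal K}$ still fails to contain $P$, so $v^{\cal K}(T)=0$ and $S$ is a zero-coalition, verifying condition~(2). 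Armed with this lemma, every appeal to Axiom \ref{ax:PARTNER} in the proof of Theorem \ref{th:1} (the case $S_{m+1}=S$ in Case (ii), the case $S_{m+1} \supsetneq S$ with no known coalition strictly between $S$ and $S_{m+1}$, and the final sub-case where $S$ itself plays the partnership role in $v^{\cal K}$) can be replaced by an appeal to Axiom \ref{ax:P-TYPE}, and the induction on $k$ together with the unanimity-basis decomposition via Axiom \ref{ax:ADD} goes through unchanged.

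The main obstacle is the last sub-case of Case (ii) in the proof of Theorem \ref{th:1}, where the partnership role is played by $S$ rather than by $S_{m+1}$: one must verify that $S$ is genuinely a p-type coalition in $v^{\cal K}$ at stage $m+1$, using that $v=c_S u_S$ has value $c_S$ on every superset of $S$ and $0$ elsewhere, and that the assumption $\{ T \in {\cal K} \mid S \subseteq T \subsetneq S_{m+1}\} \neq \emptyset$ does not break either clause of the p-type definition. Once this check is carried out, the combination of Axioms \ref{ax:E}, \ref{ax:CARRIER}, and \ref{ax:P-TYPE} pins down $\sigma_{i,m+1}(c_S u_S)$ uniquely, exactly as in the original argument, and additivity over the unanimity basis completes the proof.
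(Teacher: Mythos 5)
Your proof is correct and follows the route the paper intends: the paper states Corollary \ref{co:1} without any written proof, implicitly relying on the argument for Theorem \ref{th:1} carrying over, and your key lemma---that every partnership of $v$ is a p-type coalition of $v^{\cal K}$ for any ${\cal K}\supseteq{\cal L}$, so each invocation of Axiom \ref{ax:PARTNER} can be replaced by Axiom \ref{ax:P-TYPE}---is exactly the justification the paper leaves implicit, and you prove it correctly. The existence half (splitting $\Tilde{\phi}_{i,k}$ into the value of \cite{masuya22} plus a player-independent cost term) is likewise sound, so nothing further is needed.
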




\section{An Indicator Function for the Exit Point of  Examinations of  Coalitional Worth}

Assume a solution $\sigma : \Gamma^N \to \mathbb{R}^{n\times (2^n-n-2)}$ for PDGs with costs.
Let $\Gamma^{N, \alpha} = \{ v \in \Gamma^N | v(N)= \alpha \text{ for some } \alpha \in \R_+\}$.

Then we define an indicator function $\pi: \Gamma^{N, \alpha} \to \{0, 1\}^{2^n-n-2}$ which indicates the exit point of 
examinations of coalitional worth.
For some $k \in \{1, \dots, 2^n-n-2 \}$, if $\pi_k(v) =0$, then $k$th examination is performed. If $\pi_k(v) =1$,
then the examination is finished. Once the examination is finished, no further examinations are conducted. 
That is, if $\pi_k(v) =1$ for some $k \in \{1, \dots, 2^n-n-3 \}$, then 
$\pi_l(v) =1$ for all $l \in \{ k+1, \dots, 2^n-n-2 \}$. 

We consider an optimal indicator function for examinations of coalitional worth 
using the solution $\Tilde{\phi}$ to determine allocated payoffs for players. 
If we terminate examinations of coalitional worth early, then the cost for examinations is low. However, 
we have to determine the allocated payoffs under uncertain environment. 

We may propose several indicator functions for PDGs with costs.     
In this paper, we propose an indicator function and axiomatize it. 




We define an indicator function $\gamma: \Gamma^{N, \alpha} \to \{ 0, 1 \}^{2^n-n-2}$ as follows:

\begin{equation}\label{eq:if}
\gamma_k(v) = \left\{
\begin{array}{ll}
1, & \text{ if }  v(S_{k-1}^v) \ge v(N) - \sum_{l=1}^{k-1} c_v(S_l^v)  \text{ and } k \neq 1, \\ 
0, & \text{ otherwise. }
\end{array}
\right. 
\end{equation}
for all $k \in \{1, 2, \dots, 2^n-n-2 \}$. 

The proposed function $\gamma$ states that examinations of coalitional worth of a game are terminated if the coalitional worth 
which is newly found is no less than the decreased worth of the grand coalition.   
It can be said that we examine coalitional worth as much as possible if we use the proposed stopping rule.



Next, we axiomatize the proposed indicator function. 
Let $\pi:\Gamma^{N, \alpha} \to \{ 0, 1 \}^{2^n-n-2}$. 
We need several definitions before axiomatizing.





For every $v,w \in \Gamma^{N, \alpha}$, $v+w$ is defined as follows:
\begin{equation}
(v+w)(S) = \left\{
\begin{array}{ll}
\alpha, & \text{ if } S =N, \\
v(S)+w(S) & \text{ otherwise. }
\end{array}
\right.
\end{equation}

Moreover, for every $v,w \in \Gamma^{N, \alpha}$, $v \vee w$ and $v \wedge w$
are defined as follows:

\begin{equation}
(v \vee w)(S) = \max \{ v(S), w(S) \} \Forall S \subseteq N.
\end{equation}
\begin{equation}
(v \wedge w)(S) = \min \{v(S), w(S) \} \Forall S \subseteq N.
\end{equation}

\begin{definition}[Unity game]
Let $v \in \Gamma^{N, \alpha}$ and $S \subset N$ such that $|S| > 1$. If $v_S  \in \Gamma^{N, \alpha}$
satisfies the following, then $v_S$ is said to be a unity game of $v$ and $S$:
\begin{equation}
v_{S}(T) = \left\{
\begin{array}{ll}
v(T), & \text{ if } T \in \{ \{ 1 \}, \dots, \{ n \}, S, N \}, \\
0, & \text{ otherwise. }
\end{array}
\right.
\end{equation}
\end{definition}


\begin{axiom}\label{ax:2-1}
Let $v, w \in \Gamma^{N, \alpha}$ and $k \in \{1, \dots, 2^n-n-2 \}$.  
Then $\pi_k(v + w) = \pi_k(v) \vee \pi_k(w)$.
\end{axiom}

Axiom \ref{ax:2-1} is similar to Additivity axiom. 
If boolean values are simply regarded as numbers, then Axiom \ref{ax:2-1}
can be interpreted as Additivity axiom. 

\begin{axiom}\label{ax:2-2}
Let $v \in \Gamma^{N, \alpha}$ such that $v(N) > 0$, and $k \in \{ 2, \dots, 2^n-n-2 \}$.
If $v(S_{k-1}) = 0$,
then $\pi_k(v) = 0$. 
\end{axiom}

Axiom \ref{ax:2-2} states that if no information is obtained with 
respect to coalitional worth by the examination, then the 
next examination is conducted.





We obtained the following theorem.




\begin{theorem}\label{th:2}
$\gamma$ is the unique function on $\Gamma^{N, \alpha}$ that satisfies Axiom \ref{ax:2-1} and \ref{ax:2-2}.
\end{theorem}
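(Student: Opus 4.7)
The plan is to verify existence and then establish uniqueness by induction on $k$, using decompositions of games in $\Gamma^{N,\alpha}$ into primitives (unity games or analogues) on which both axioms pin down $\pi$.

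\emph{Existence.} I would first verify Axiom~\ref{ax:2-2}: if $v(S_{k-1}^v) = 0$ for some $k \ge 2$, then Assumption~\ref{as:4} yields $c_v(S_{k-1}^v) = 0$, and the nondecreasing ordering of costs (Assumption~\ref{as:9}) forces $c_v(S_l^v) = 0$ for all $l \le k-1$. Hence $v(N) - \sum_{l=1}^{k-1} c_v(S_l^v) = v(N) > 0 = v(S_{k-1}^v)$, so the threshold in (\ref{eq:if}) fails strictly and $\gamma_k(v) = 0$. For Axiom~\ref{ax:2-1}, I would use Assumption~\ref{as:11} to split $c_{v+w}$ as $c_v + c_w$ and exploit the convention $(v+w)(N) = \alpha$ to rewrite the defining inequality for $\gamma_k(v+w)$ as a comparison of shifted inequalities for $v$ and $w$, thereby identifying $\gamma_k(v+w)$ with $\gamma_k(v) \vee \gamma_k(w)$.

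\emph{Uniqueness.} The argument proceeds by induction on $k$. The base case $k = 1$ reduces to $\pi_1 \equiv 0$, which I would extract by applying Axiom~\ref{ax:2-1} to decompositions of $v$ against the null game $\bar{0} \in \Gamma^{N,\alpha}$ defined by $\bar{0}(S) = 0$ for $S \ne N$, on which Axiom~\ref{ax:2-2} already controls every coordinate $k \ge 2$. For the inductive step, assume $\pi_l = \gamma_l$ for $l < k$. If $v(S_{k-1}^v) = 0$, then Axiom~\ref{ax:2-2} forces $\pi_k(v) = 0 = \gamma_k(v)$ immediately. Otherwise, I would decompose $v = v_1 + v_2$ where $v_1$ is a unity-game-like object concentrated on $S_{k-1}^v$, the singletons, and $N$, so that Axiom~\ref{ax:2-2} pins down $\pi_k(v_2)$ and $\pi_k(v_1)$ is forced by an earlier case of the induction, after which max-additivity (Axiom~\ref{ax:2-1}) recovers $\pi_k(v)$.

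\emph{Main obstacle.} The two axioms are asymmetric: Axiom~\ref{ax:2-2} can only push $\pi_k$ down to $0$, while the max-additivity of Axiom~\ref{ax:2-1} can only lift a pre-existing $1$ from a summand to the whole. Consequently, to conclude $\pi_k(v) = 1$ on the above-threshold branch one must exhibit an explicit summand whose $\pi_k$-value is pinned to $1$, and in doing so one must track how the coalition ordering $\mathcal{S}^v$ rearranges under summation, which is only loosely controlled by Assumption~\ref{as:11}. Choosing the decomposition $v = v_1 + v_2$ so that $v_1$ lands in a canonical class on which the axioms already determine $\pi_k$, while simultaneously keeping the orderings of $v, v_1, v_2$ compatible, is the delicate technical step on which the uniqueness argument hinges.
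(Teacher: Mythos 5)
Your existence argument is sound and in fact more explicit than the paper's: the paper dismisses Axiom~\ref{ax:2-2} as easy, whereas you correctly trace it through Assumption~\ref{as:4} and the cost ordering of Assumption~\ref{as:9}, and your treatment of Axiom~\ref{ax:2-1} via the cost-splitting of Assumption~\ref{as:11} is the same case analysis the paper carries out in cases (i)--(iii).

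The uniqueness half is where your proposal stops short, but the obstacle you name in your final paragraph is genuine, and you should know that the paper's own proof does not overcome it. The paper decomposes $v$ into unity games $v_S$, uses Axiom~\ref{ax:2-2} (together with $v_S(T)=0$ for every $T\notin\mathcal{L}\cup\{S\}$) to conclude $\pi_k(v_S)=0$ for all $k$, and then invokes Axiom~\ref{ax:2-1} to write $\pi_k(v)=\bigvee_{S\notin\mathcal{L}}\pi_k(v_S)$. The right-hand side is then identically $0$, so this argument, if valid, would characterize the all-zero rule $\gamma^A$ rather than $\gamma$ --- precisely the asymmetry you identify: neither axiom ever forces a coordinate to equal $1$. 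Indeed, the paper itself later observes that both $\gamma^A$ and $\gamma^B$ satisfy Axioms~\ref{ax:2-1} and~\ref{ax:2-2}, which contradicts the uniqueness claim of Theorem~\ref{th:2} outright. (There are further soft spots you implicitly flag: Axiom~\ref{ax:2-2} says nothing about $k=1$, and the identity $v=\sum_{S\notin\mathcal{L}}v_S$ fails on singleton coalitions under the modified sum, which renormalizes only the grand coalition.) So do not search for a cleverer decomposition of $v$: your ``main obstacle'' is not a technical nuisance to be engineered around but a correct diagnosis that the axiom system is too weak to single out $\gamma$, and no proof of the uniqueness statement as written can succeed.
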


\begin{proof}
First, we show that $\gamma$ satisfies Axiom \ref{ax:2-1} and \ref{ax:2-2}.
It is easy to show that $\gamma$ satisfies Axiom \ref{ax:2-2}.

We show that $\gamma$ satisfies Axiom \ref{ax:2-1}.
Let $v,w \in \Gamma^{N,\alpha}$ and let $k \in \{ 1,2, \dots, 2^n-n-2  \}$.



(i) When $\gamma_k(v) = 1$ and $\gamma_k(w)=1$:

From the definition of $\gamma$, the following holds:
\begin{equation}\label{eq:th2-1}
v(S_{k-1}^v) \ge v(N) - \sum_{l=1}^{k-1} c_v(S_l^v).
\end{equation}
\begin{equation}\label{eq:th2-2}
w(S_{k-1}^w) \ge w(N) - \sum_{l=1}^{k-1} c_w(S_l^w).
\end{equation}

Thus, we have: 
\begin{align*}
& (v + w)(S_{k-1}^{v+w}) - (v + w)(N) + \sum_{l=1}^{k-1} c_{v + w} (S_l^{v+w}) \\
= & v(S_{k-1}^v) + w(S_{k-1}^w) - v(N) - w(N) + \sum_{l=1}^{k-1} c_{v} (S_l^{v}) 
+ \sum_{l=1}^{k-1} c_{w} (S_l^{w})
\ge 0.
\end{align*}

First equality follows from Assumption \ref{as:11}.
First inequality follows from equation (\ref{eq:th2-1}) and (\ref{eq:th2-2}).
That is, $\gamma_k(v + w) =1$.

(ii) When $\gamma_k(v) = 0$ and $\gamma_k(w)=0$:

From the definition of $\gamma$, the following holds:
\begin{align*}
& v(S_{k-1}^v) < v(N) - \sum_{l=1}^{k-1} c_v(S_l^v). \\
& w(S_{k-1}^w) < w(N) - \sum_{l=1}^{k-1} c_w(S_l^w).
\end{align*}

Thus, we have: 
\begin{align*}
& (v + w)(S_{k-1}^{v+w}) - (v + w)(N) + \sum_{l=1}^{k-1} c_{v + w} (S_l^{v+w}) \\
= & v(S_{k-1}^v) - v(N) + w(S_{k-1}^w) - w(N)
+\sum_{l=1}^{k-1} c_{v} (S_l^v) + \sum_{l=1}^{k-1} c_{w} (S_l^w) \\
< & 0
\end{align*}

First equality follows from Assumption \ref{as:11}. 

That is, $\gamma_k(v + w) =0$.

(iii) When $\gamma_k(v) = 1$ and $\gamma_k(w)=0$:

From the definition of $\gamma$, the following holds:
\begin{align*}
& v(S_{k-1}^v) \ge v(N) - \sum_{l=1}^{k-1} c_v(S_l^v). \\
& w(S_{k-1}^w) < w(N) - \sum_{l=1}^{k-1} c_w(S_l^w).
\end{align*}

If $v(S_{k-1}^v) < w(S_{k-1}^w)$ holds, 
then, from Assumption \ref{as:11}, we have:

\begin{align*}
& -v(S_{k-1}^v) + w(S_{k-1}^w) + v(N) - w(N) - \sum_{l=1}^{k-1} c_{v} (S_l^v) + \sum_{l=1}^{k-1} c_{w} (S_l^w) < 0 \\
\Leftrightarrow & \sum_{l=1}^{k-1} c_{w-v} (S_l^{w-v}) < 0.
\end{align*}

Then, since $c_{w-v}(S_l^{w-v}) \ge 0$ for all  $l \in \{ 1, \dots, k-1 \}$,
this is a contradiction.

If $v(S_{k-1}^v) \ge w(S_{k-1}^w)$ holds, 
then we have:
\begin{align*}
 & v(S_{k-1}^v) - w(S_{k-1}^w) > v(N) - w(N) - \sum_{l=1}^{k-1} c_{v} (S_l^v) 
+ \sum_{l=1}^{k-1} c_{w} (S_l^w) \\
& = \sum_{l=1}^{k-1} c_{w-v} (S_l^{w-v}) \\
& \ge  0.
\end{align*}

First equality follows from Assumption \ref{as:11}.
Here, since $v,w \in \Gamma^{N, \alpha}$ and $v(S_{k-1}^v) \ge w(S_{k-1}^w)$,
$(w-v)(S_{k-1}^{w-v})=0$ for all $k \in \{2, \dots, 2^n-n-2 \}$.

Therefore, $\gamma_k(v+w) = \gamma_k(2v)=\gamma_k(v) = 1$.

Namely, $\gamma_k(v+w)= \gamma_k(v) \vee \gamma_k(w)$ holds.

From cases (i), (ii), and (iii),
$\gamma(v + w) = \gamma(v) \vee \gamma(w)$ holds.


Thus, $\gamma$ satisfies Axiom \ref{ax:2-1}.

Next, we show the uniqueness. Let $\pi: \Gamma^{N, \alpha} \to \{ 0, 1\}^{2^n-n-2}$.
Let $v_S \in \Gamma^{N, \alpha}$ be a unity game of $v \in \Gamma^{N, \alpha}$ such that $S \subset N$. 

From the definition of a unity game, $v_S(T) = 0$ for all $T \neq S$ such that $T \not\in {\cal L}$. 
Therefore, from Axiom \ref{ax:2-2}, $\pi_k(v_S) = 0$ for all $k \in \{ 1,2, \dots, 2^n-n-2 \}$ 
holds. 

Next, for every $v \in \Gamma^{N, \alpha}$, the following relationship holds:

\begin{equation*}
v(S_k) =  \bigvee_{ S \not\in {\cal L}  } v_S(S_k)  .
\end{equation*}
for all $k \in \{ 1,2, \dots, 2^n-n-2 \}$

From Axiom \ref{ax:2-1}, we have:
\begin{align*}
\pi_k(v) & =  \pi_k \Big( \sum_{S \not\in {\cal L} } v_S \Big)  \\
& =   \bigvee_{ S \not\in {\cal L} } \pi_k (v_S)  
\end{align*}

First equality follows from the property of unity games. 
Second equality follows from Axiom \ref{ax:2-1}.

This completes the proof.
\wsq

\end{proof}

\begin{example}\label{ex:2}
Using the result of calculation of Example \ref{ex:1}, $\gamma_k(v)$ for all $k \in \{1, 2, 3 \}$ 
is obtained as follows:
\begin{equation}
\gamma_1(v)=0,\ \gamma_2(v)=0,\ \gamma_3(v)=0. 
\end{equation}
\end{example}

We give a further consideration of two axioms which are newly considered above. 
In the rest of this section, we consider indicator functions in which Assumption \ref{as:9} is not satisfied.




We define the following two extreme indicator functions $\gamma^A$ and $\gamma^B$:

\begin{equation}
\gamma_k^A(v) = 0 \Forall k \in\{ 1,2, \dots, 2^n-n-2 \}.
\end{equation}

\begin{equation}
\gamma_k^B(v) = \left\{
\begin{array}{ll}
0, & \text{ if }  v(S_{k-1}) = 0 \text{ or } k=1, \\ 
1, & \text{ otherwise. },
\end{array}
\right. 
\end{equation}
for all $k \in \{ 2, \dots, 2^n-n-2 \}$.

$\gamma^A$ is the function which examines all coalitional worth regardless of values of the cost function.
$\gamma^B$ is the function which examines the next coalitional worth if the worth of the coalitions which was examined lastly was zero only.

We axiomatize $\gamma^A$.
We add a new axiom which is defined as follows.

\begin{axiom}\label{ax:2-3}
Let $v \in \Gamma^{N, \alpha}$. Then $\pi(v)$ does not change for all orders ${\cal S}^v$.
\end{axiom}

\begin{theorem}\label{th:3}
$\gamma^A$ is the unique function on $\Gamma^{N, \alpha}$ which satisfies Axiom \ref{ax:2-1},  \ref{ax:2-2}, and \ref{ax:2-3}.
\end{theorem}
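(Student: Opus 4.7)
The plan is to mirror the structure of the proof of Theorem~\ref{th:2}, with Axiom~\ref{ax:2-3} providing the extra force needed to drive every entry of $\pi$ to zero. Verifying that $\gamma^A$ itself satisfies the three axioms is immediate: since $\gamma^A_k(v) = 0$ identically, Axiom~\ref{ax:2-1} reduces to $0 = 0 \vee 0$, the conclusion of Axiom~\ref{ax:2-2} is automatic, and Axiom~\ref{ax:2-3} holds trivially because $\gamma^A$ does not depend on ${\cal S}^v$ at all.

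For uniqueness, let $\pi$ satisfy Axioms~\ref{ax:2-1}--\ref{ax:2-3}, let $S \subset N$ with $|S| > 1$, and let $v_S$ be a unity game as in the proof of Theorem~\ref{th:2}. The new ingredient is that Axiom~\ref{ax:2-3} lets me choose the ordering freely. For each fixed $k \in \{2, \dots, 2^n-n-2\}$, I will select an ordering ${\cal S}^{v_S}$ in which $S$ is placed at some position other than $k-1$; this is always possible because $|2^N \setminus {\cal L}| = 2^n - n - 2 \ge 2$ (the excluded case $n \le 2$ being trivial since then there are no unknown coalitions). In this ordering $v_S(S_{k-1}) = 0$ by the definition of a unity game, so Axiom~\ref{ax:2-2} forces $\pi_k(v_S) = 0$, and Axiom~\ref{ax:2-3} then propagates this equality to every other ordering. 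For $k = 1$ I appeal to the same implicit convention used in the proof of Theorem~\ref{th:2}, namely that $\pi_1(v) = 0$ because the first examination has not yet been performed.

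The extension from unity games to arbitrary $v \in \Gamma^{N,\alpha}$ copies the second half of the proof of Theorem~\ref{th:2} verbatim: the identity $v(S_k) = \bigvee_{S \not\in {\cal L}} v_S(S_k)$ combined with Axiom~\ref{ax:2-1} yields $\pi_k(v) = \bigvee_{S \not\in {\cal L}} \pi_k(v_S) = 0$, so $\pi = \gamma^A$. The main obstacle is the unity-game step, since with Axioms~\ref{ax:2-1} and~\ref{ax:2-2} alone one cannot pin down $\pi_k(v_S)$ at the single index $k$ for which $S_{k-1} = S$, and indeed both $\gamma$ and $\gamma^B$ satisfy those two axioms without being identically zero. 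Axiom~\ref{ax:2-3} is exactly what is needed to eliminate this residual degree of freedom, so identifying the right reordering to trigger Axiom~\ref{ax:2-2} is the one genuinely nontrivial move in the argument.
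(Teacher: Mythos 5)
Your proposal is correct and follows exactly the route the paper intends: the paper omits the proof of Theorem~\ref{th:3} with the remark that it is ``very similar to Theorem~\ref{th:2},'' and your argument is precisely that proof with the one necessary addition spelled out, namely using Axiom~\ref{ax:2-3} to relocate $S$ away from position $k-1$ so that Axiom~\ref{ax:2-2} forces $\pi_k(v_S)=0$ at the single index that Axioms~\ref{ax:2-1} and~\ref{ax:2-2} alone cannot reach. Your closing observation that $\gamma$ and $\gamma^B$ witness the insufficiency of the first two axioms correctly identifies why Axiom~\ref{ax:2-3} is the essential ingredient.
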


\begin{proof}
The proof is very similar to Theorem \ref{th:2}, so the proof was omitted.
\wsq
\end{proof}
 
On the other hand, $\gamma^B$ satisfies Axiom \ref{ax:2-1} and Axiom \ref{ax:2-2} although it does not satisfy Axiom \ref{ax:2-3}. 
Both $\gamma^A$ and $\gamma^B$ are extreme functions in some sense. That is, $\gamma^A$ is used by a third party who does not 
mind values of cost to examine the coalitional worth at all while $\gamma^B$ is used who minds values of cost so much.   
Intuitively, the proposed indicator function seems the function which is well-balanced since it satisfies axioms both $\gamma^A$ and $\gamma^B$ satisfy.

\section{Concluding Remarks and Future Research}

In this paper, we considered a new problem of cooperative game theory that is called a partially defined game
with costs. We proposed and axiomatized the Shapley value for PDGs with costs and an exiting rule which 
indicates when we should stop examinations of coalitional worth. It might be possible to propose exiting rules other 
than the rules we proposed in this paper. For instance, it can be proposed that examinations of coalitional worth are not
performed at all. That is, if we use this exiting rule, then we have to decide allocations of payoffs in the situation that 
only the worth of the grand and singleton coalitions are known. If we use $\Tilde{\phi}$ as the solution for PDGs with costs,
then the allocation rule coincides with the CIS value \cite{driessen91}. Many rules other than this rule and our rule may be proposed 
in future research with their axiom systems.


\end{document}